
\documentclass[conference]{IEEEtran}
% Some Computer Society conferences also require the compsoc mode option,
% but others use the standard conference format.
%
% If IEEEtran.cls has not been installed into the LaTeX system files,
% manually specify the path to it like:
% \documentclass[conference]{../sty/IEEEtran}

\usepackage[utf8]{inputenc} % allow utf-8 input
\usepackage[T1]{fontenc}    % use 8-bit T1 fonts
\usepackage{hyperref}       % hyperlinks
\usepackage{url}            % simple URL typesetting
\usepackage{booktabs}       % professional-quality tables
\usepackage{amsfonts}       % blackboard math symbols
\usepackage{nicefrac}       % compact symbols for 1/2, etc.
\usepackage{microtype}      % microtypography
\usepackage{amsmath}
\usepackage{theorem}
\theorembodyfont{\upshape}
\usepackage{graphicx}
\usepackage{bm}
\def\qed{\hfill $\Box$}

\newtheorem{defi}{Definition}
\newtheorem{theo}{Theorem}
\newtheorem{proof}{Proof}
\newtheorem{example}{Example}
\DeclareMathOperator*{\argmin}{arg\,min}
\DeclareMathOperator*{\argmax}{arg\,max}

% Some very useful LaTeX packages include:
% (uncomment the ones you want to load)

% *** MISC UTILITY PACKAGES ***
%
%\usepackage{ifpdf}
% Heiko Oberdiek's ifpdf.sty is very useful if you need conditional
% compilation based on whether the output is pdf or dvi.
% usage:
% \ifpdf
%   % pdf code
% \else
%   % dvi code
% \fi
% The latest version of ifpdf.sty can be obtained from:
% http://www.ctan.org/pkg/ifpdf
% Also, note that IEEEtran.cls V1.7 and later provides a builtin
% \ifCLASSINFOpdf conditional that works the same way.
% When switching from latex to pdflatex and vice-versa, the compiler may
% have to be run twice to clear warning/error messages.

% *** CITATION PACKAGES ***
%
%\usepackage{cite}
% cite.sty was written by Donald Arseneau
% V1.6 and later of IEEEtran pre-defines the format of the cite.sty package
% \cite{} output to follow that of the IEEE. Loading the cite package will
% result in citation numbers being automatically sorted and properly
% "compressed/ranged". e.g., [1], [9], [2], [7], [5], [6] without using
% cite.sty will become [1], [2], [5]--[7], [9] using cite.sty. cite.sty's
% \cite will automatically add leading space, if needed. Use cite.sty's
% noadjust option (cite.sty V3.8 and later) if you want to turn this off
% such as if a citation ever needs to be enclosed in parenthesis.
% cite.sty is already installed on most LaTeX systems. Be sure and use
% version 5.0 (2009-03-20) and later if using hyperref.sty.
% The latest version can be obtained at:
% http://www.ctan.org/pkg/cite
% The documentation is contained in the cite.sty file itself.

% *** GRAPHICS RELATED PACKAGES ***
%
\ifCLASSINFOpdf
  % \usepackage[pdftex]{graphicx}
  % declare the path(s) where your graphic files are
  % \graphicspath{{../pdf/}{../jpeg/}}
  % and their extensions so you won't have to specify these with
  % every instance of \includegraphics
  % \DeclareGraphicsExtensions{.pdf,.jpeg,.png}
\else
  % or other class option (dvipsone, dvipdf, if not using dvips). graphicx
  % will default to the driver specified in the system graphics.cfg if no
  % driver is specified.
  % \usepackage[dvips]{graphicx}
  % declare the path(s) where your graphic files are
  % \graphicspath{{../eps/}}
  % and their extensions so you won't have to specify these with
  % every instance of \includegraphics
  % \DeclareGraphicsExtensions{.eps}
\fi
\hyphenation{op-tical net-works semi-conduc-tor}

\begin{document}
%
% paper title
% Titles are generally capitalized except for words such as a, an, and, as,
% at, but, by, for, in, nor, of, on, or, the, to and up, which are usually
% not capitalized unless they are the first or last word of the title.
% Linebreaks \\ can be used within to get better formatting as desired.
% Do not put math or special symbols in the title.
\title{A Note on the Estimation Method of Intervention Effects based on Statistical Decision Theory}

% author names and affiliations
% use a multiple column layout for up to three different
% affiliations
\author{\IEEEauthorblockN{Shunsuke Horii}
\IEEEauthorblockA{Waseda University\\
1-6-1, Nishiwaseda, Shinjuku-ku,\\
Tokyo 169-8050, Japan\\
Email: s.horii@aoni.waseda.jp}
\and
\IEEEauthorblockN{Tota Suko}
\IEEEauthorblockA{Waseda University\\
1-6-1, Nishiwaseda, Shinjuku-ku,\\
Tokyo 169-8050, Japan\\
Email: suko@waseda.jp}
}

% conference papers do not typically use \thanks and this command
% is locked out in conference mode. If really needed, such as for
% the acknowledgment of grants, issue a \IEEEoverridecommandlockouts
% after \documentclass

% for over three affiliations, or if they all won't fit within the width
% of the page, use this alternative format:
% 
%\author{\IEEEauthorblockN{Michael Shell\IEEEauthorrefmark{1},
%Homer Simpson\IEEEauthorrefmark{2},
%James Kirk\IEEEauthorrefmark{3}, 
%Montgomery Scott\IEEEauthorrefmark{3} and
%Eldon Tyrell\IEEEauthorrefmark{4}}
%\IEEEauthorblockA{\IEEEauthorrefmark{1}School of Electrical and Computer Engineering\\
%Georgia Institute of Technology,
%Atlanta, Georgia 30332--0250\\ Email: see http://www.michaelshell.org/contact.html}
%\IEEEauthorblockA{\IEEEauthorrefmark{2}Twentieth Century Fox, Springfield, USA\\
%Email: homer@thesimpsons.com}
%\IEEEauthorblockA{\IEEEauthorrefmark{3}Starfleet Academy, San Francisco, California 96678-2391\\
%Telephone: (800) 555--1212, Fax: (888) 555--1212}
%\IEEEauthorblockA{\IEEEauthorrefmark{4}Tyrell Inc., 123 Replicant Street, Los Angeles, California 90210--4321}}

% use for special paper notices
%\IEEEspecialpapernotice{(Invited Paper)}

% make the title area
\maketitle

% As a general rule, do not put math, special symbols or citations
% in the abstract
\begin{abstract}
In this paper, we deal with the problem of estimating the intervention effect in the statistical causal analysis using the structural equation model and the causal diagram.
 The intervention effect is defined as a causal effect on the response variable $Y$ when the causal variable $X$ is fixed to a certain value by an external operation and is defined based on the causal diagram.
 The intervention effect is defined as a function of the probability distributions in the causal diagram, however, generally these probability distributions are unknown, so it is required to estimate them from data.
In other words, the steps of the estimation of the intervention effect using the causal diagram are as follows: 1. Estimate the causal diagram from the data, 2. Estimate the probability distributions in the causal diagram from the data, 3. Calculate the intervention effect.
However, if the problem of estimating the intervention effect is formulated in the statistical decision theory framework, estimation with this procedure is not necessarily optimal. In this study, we formulate the problem of estimating the intervention effect for the two cases, the case where the causal diagram is known and the case where it is unknown, in the framework of statistical decision theory and derive the optimal decision method under the Bayesian criterion. We show the effectiveness of the proposed method through numerical
simulations.
\end{abstract}

% no keywords

% For peer review papers, you can put extra information on the cover
% page as needed:
% \ifCLASSOPTIONpeerreview
% \begin{center} \bfseries EDICS Category: 3-BBND \end{center}
% \fi
%
% For peerreview papers, this IEEEtran command inserts a page break and
% creates the second title. It will be ignored for other modes.
\IEEEpeerreviewmaketitle

\section{Introduction}

Causal analysis based on linear structural equation model and path analysis is widely used in sociology, economics, biology, etc.
Pearl extended the concept of total effects in the path analysis to a general structural equation model and defined it as the intervention effect \cite{pearl1995causal}.
Fixing a variable $X$ at a certain value $x$ by an external operation is called intervention, and the intervention effect is mathematically defined as a causal effect on the response variable $Y$.
The intervention effect is defined based on a causal diagram that expresses the existence or nonexistence of a causal relationship between variables and conditional probability distributions that expresses causal relationships among variables.
However, in general, the causal diagram and the conditional probability distributions among variables are unknown, so it is necessary to estimate both from the data.
That is, the calculation of the intervention effect based on the causal diagram consists of the following steps.
\begin{enumerate}
 \item Estimate a causal diagram from the data
 \item Estimate the conditional probability distributions among variables from the data
 \item Calculate the intervention effect
\end{enumerate}
The estimation methods of the causal diagram are roughly divided into two categories: constraint-based methods (such as PC algorithm \cite{spirtes1991algorithm}) that estimates the structure with constraints such as conditional independence among variables, and score-based methods (such as GES algorithm \cite{chickering2002finding}) that output a graph with the maximum approximate value of posterior probability.
Estimation of a conditional probability distribution is a general topic not limited to causal inference, and widely used approaches are estimating a parameter by assuming a parametric probability distribution or estimating by a nonparametric method.
In this research, we assume parametric probability distributions for the conditional probability distributions.
Although it is known that the identifiability of causal diagrams would change by assumptions on the conditional probability distributions \cite{shimizu2006linear}, this research does not deal with that point in depth.
However, we note that the proposal in this research is applicable as long as parametric distribution is assumed for the conditional probability distribution.
Since the intervention effect is defined on the causal diagram and the conditional probability distributions, it seems natural to estimate it by the above procedure.
However, if we formulate the problem of estimating the intervention effect based on the statistical decision theory, estimating it by this procedure is not necessarily optimal.
In this study, the problem of estimating the intervention effect is formulated in the framework of the statistical decision theory for each case where the causal diagram is known and unknown, and the optimal decision function is derived under the Bayes criterion.
The remainder of the paper is organized as follows.
In Section 2, the definitions of the structural equation model, causal diagram, and intervention effect are described.
In Section 3, we formulate the problem to estimate the intervention effect as a statistical decision problem for the case where the causal diagram is known and derive the optimal decision function under the Bayes criterion.
In Section 4, we do the same thing as in Section 3 for the case where the causal diagram is unknown.
In Section 5, we evaluate the effectiveness of the proposed method by comparing the intervention effect estimated by the proposed method and that estimated by two stage method, that is, calculate the intervention effect after estimating the causal diagram and/or the conditional probability distributions.
Finally, we give a summary and future works in Section 6.

\section{Causal diagram and intervention effect}

Here, after describing the definition of the causal diagram, we describe the mathematical definition of the intervention effect.

\subsection{Causal diagram}
\begin{defi}
 Let $G$ be a directed acyclic graph (DAG) and $V=(X_{1}, X_{2}, \ldots, X_{m})$ be a set of random variables that corresponds to the set of the vertices of $G$.
$G$ is called a causal diagram if it specifies the causal relationships among variables in the following form,
\begin{align}
X_{i}=g_{i}(\mbox{pa}(X_{i}),\epsilon_{i}),\quad i=1,\ldots,m, \label{SEM}
\end{align}
and the random variables are generated according to this causal relationship.
The equations (\ref{SEM}) are called structural equations for $X_{1}, X_{2}, \ldots, X_{m}$.
$\mbox{pa}(X_{i})\subset V$ is the set of variables that have an arrow that heads to $X_{i}$.
We assume that $\epsilon_{1}, \epsilon_{2}, \ldots, \epsilon_{m}$ are mutually independent.
\end{defi}
Let $p(x_{i}|\mbox{pa}(x_{i}))$ be the conditional probability distribution of $X_{i}$ given $\mbox{pa}(X_{i})$.

\begin{figure}[t]
\begin{center}
\includegraphics[keepaspectratio=true,width=\linewidth]{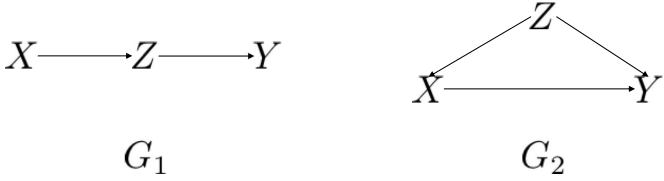}
\caption{Examples of causal diagram.}
\label{fig_diagram}
\end{center}
\end{figure}

\begin{example}
If the causal diagram of the random variables $X, Y, Z$ is $G_{1}$ in Figure \ref{fig_diagram}, there are causal relationships,
\begin{align}
Z&=g_{Z}(X, \epsilon_{Z}),\\
Y&=g_{Y}(Z, \epsilon_{Y}).
\end{align}
Similarly, if the causal diagram of the random variables $X, Y, Z$ is $G_{2}$ in Figure \ref{fig_diagram}, there are causal relationships,
\begin{align}
X&=g_{X}(Z, \epsilon_{X}),\\
Y&=g_{Y}(X, Z, \epsilon_{Y}).
\end{align}
\end{example}

\subsection{Intervention effect}
In a causal diagram, an external operation that fixes the value of $X$ to a constant regardless of the value of other variables is called intervention, and the distribution of $Y$ after the intervention is called intervention effect. Its mathematical definition is given as follows \cite{pearl1995causal}.
\begin{defi}
Let $V=\left\{X, Y, Z_{1}, Z_{2}, \ldots, Z_{p}\right\}$ be the set of vertices of a causal diagram $G$.
The intervention on $Y$ when intervening $X=x$ is defined as 
\begin{align}
 p(y|\mbox{do}(X=x))=\int\cdots\int \frac{p(x,y,z_{1},\ldots,z_{p})}{p(x|\mbox{pa}(x))}dz_{1}\ldots dz_{p}\label{effect_no_parameter}.
\end{align}
$\mbox{do}(X=x)$ means that $X$ is fixed to $x$ by intervention.
\end{defi}

(\ref{effect_no_parameter}) can be calculated only after the causal diagram is determined and the conditional distributions among the random variables are estimated.
Let $m$ be the variable that represents the causal diagram and the conditional probability distributions are parametric distributions specified by a parameter $\bm{\theta}_{m}$.
To clarify that the intervention effect depends on $m$ and $\bm{\theta}_{m}$, we rewrite (\ref{effect_no_parameter}) as follows.
\begin{multline}
p(y|\mbox{do}(X=x),m,\bm{\theta}_{m})=\\
\int\cdots\int \frac{p(x,y,z_{1},\ldots,z_{p}|m, \bm{\theta}_{m})}{p(x|\mbox{pa}(x),m, \bm{\theta}_{m})}dz_{1}\ldots dz_{p}.\label{effect_parameter}
\end{multline}

\begin{example}
 Assume that the causal diagram $m$ of $X, Y, Z$ is $G_{1}$ in Figure \ref{fig_diagram} and the structural equations are linear, that is,
\begin{align}
 Z&=\theta_{Z|X}X+\epsilon_{Z},\quad \epsilon_{Z}\sim \mathcal{N}(0, 1^{2}),\label{example_SEM_1_1}\\
Y&=\theta_{Y|Z}Z+\epsilon_{Y},\quad \epsilon_{Y}\sim\mathcal{N}(0, 1^{2}),\label{example_SEM_1_2}
\end{align}
where $\mathcal{N}(\mu, \sigma^{2})$ denotes the normal distribution with mean $\mu$ and variance $\sigma^{2}$.
Then, $\bm{\theta}_{m}=(\theta_{Z|X}, \theta_{Y|Z})$ and the intervention effect on $Y$ when intervening $X=x$ is given by
\begin{align}
 p(y|\mbox{do}(X=x),m=G_{1}, \bm{\theta}_{m})=\mathcal{N}(y; \theta_{Y|Z}\theta_{Z|X}x, 1+\theta_{Y|Z}^{2}),
\end{align}
where $\mathcal{N}(\cdot; \mu, \sigma^{2})$ denotes the probability density function of $\mathcal{N}(\mu, \sigma^{2})$.
In this case, it is well known that the intervention effect equals to the conditional pribability distribution $p(y|x,\bm{\theta}_{m})$ and the above formula describes this in detail.

Similarlly, assume that the causal diagram $m$ of $X, Y, Z$ is $G_{2}$ in Figure \ref{fig_diagram} and the structural equations are given by
\begin{align}
X&=\theta_{X|Z}Z+\epsilon_{X},\quad \epsilon_{X}\sim\mathcal{N}(0, 1^{2}),\label{example_SEM_2_1}\\
Y&=\theta_{Y|X}X+\theta_{Y|Z}Z+\epsilon_{Y},\quad \epsilon_{Y}\sim\mathcal{N}(0, 1^{2}).\label{example_SEM_2_2}
\end{align}
Then, $\bm{\theta_{m}}=(\theta_{X|Z}, \theta_{Y|X}, \theta_{Y|Z})$ and the intervention effect on $Y$ when intervening $X=x$ is given by
\begin{align}
&p(y|\mbox{do}(X=x),m=G_{2}, \bm{\theta}_{m})=\mathcal{N}(y; \tilde{\mu}, \tilde{s}^{-1}),\\
&\tilde{\mu}=\tilde{s}^{-1}\theta_{Y|X}x-\mu_{Z}\theta_{Y|Z},\\
&\tilde{s}=\frac{s_{Z}}{\theta_{Y|Z}^{2}+s_{Z}},
\end{align}
where we assumed that $Z\sim\mathcal{N}(\mu_{Z}, s_{Z}^{-1})$.
\end{example}

\section{Decision theoretic approach for estimating intervention effect; causal diagram is known}
Here, we consider the case where the causal diagram $m$ is known, but $\bm{\theta}_{m}$ is unknown.
In this case, we cannot calculate (\ref{effect_parameter}) directly and we have to estimate it from the data.
Let $D^{n}=(x_{n}, y_{n}, z_{1n},\ldots, z_{pn})_{n=1,\ldots,N}$ be a sample of $X, Y, Z_{1}, \ldots, Z_{p}$ with size $n$.
Decision function $AP:D^{n}\mapsto p(y|x)$ outputs an estimate of the intervention effect.
We have to define some loss function for the decision function.
In this study, the Kullback-Leibler divergence with the intervention effect is used as a loss function.
\begin{multline}
Loss(\bm{\theta}_{m}, AP(D^{n}))= \\
\int p(y|\mbox{do}(X=x),m,\bm{\theta}_{m})\ln \frac{p(y|\mbox{do}(X=x),m,\bm{\theta}_{m})}{AP(D^{n})(y|x)}dy.\label{loss}
\end{multline}
The risk function is defined as the expectation of the loss function with respect to $D^{n}$.
\begin{align}
Risk(\bm{\theta}_{m},AP)=E_{D^{n}|\bm{\theta}}\left[Loss(\bm{\theta}_{m}, AP(D^{n}))\right].
\end{align}
The risk function is a function of the parameter $\bm{\theta}_{m}$ and there is no decision function that minimizes the risk function for all parameter $\bm{\theta}_{m}\in\Theta_{m}$.
In this study, we assume a prior distribution $p(\bm{\theta}_{m})$ for the parameter $\bm{\theta}_{m}$ and consider the following Bayes risk function.
\begin{align}
BR(AP)=E_{\bm{\theta}_{m}}\left[Risk(\bm{\theta}_{m}, AP)\right].\label{BR}
\end{align}
Then, the following theorem holds.
\begin{theo}
 \label{theorem1}
The Bayes optimal decision function that minimizes (\ref{BR}) is given by 
\begin{align}
AP^{*}(D^{n})=p(y|\mbox{do}(X=x),m,D^{n}),\label{bayes_optimal_fixed_model}
\end{align}
where 
\begin{multline}
p(y|\mbox{do}(X=x),m,D^{n})=\\
\int p(y|\mbox{do}(X=x),m,\bm{\theta}_{m})p(\bm{\theta}_{m}|m, D^{n})d\bm{\theta}_{m},\label{predict_fixed_model}
\end{multline}
\end{theo}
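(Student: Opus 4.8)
The plan is to recognize this as the standard Bayes-optimality result for Kullback--Leibler loss, in which the optimal predictive distribution turns out to be the posterior mixture. Writing $p_{\bm{\theta}_m}(y)$ as shorthand for the intervention effect $p(y|\mbox{do}(X=x),m,\bm{\theta}_m)$, I would first expand the Bayes risk (\ref{BR}) as a fully nested integral over the prior, the data, and the loss integral (\ref{loss}):
\begin{multline*}
BR(AP) = \int\!\!\int\!\!\int p(\bm{\theta}_m)\, p(D^n|\bm{\theta}_m)\, p_{\bm{\theta}_m}(y) \\
\times \ln\frac{p_{\bm{\theta}_m}(y)}{AP(D^n)(y|x)}\, dy\, dD^n\, d\bm{\theta}_m .
\end{multline*}
Then I would apply Bayes' theorem to rewrite the joint density as $p(\bm{\theta}_m)\, p(D^n|\bm{\theta}_m) = p(D^n)\, p(\bm{\theta}_m|m, D^n)$ and use Fubini's theorem to move the integration over $D^n$ (against its marginal) to the outside. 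This exposes, for each fixed data set $D^n$, the inner objective as the posterior expected divergence
\[
J(D^n) = E_{\bm{\theta}_m|m,D^n}\!\left[\, \mbox{KL}\big( p_{\bm{\theta}_m} \,\|\, AP(D^n) \big) \right].
\]

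Since a decision function is free to assign its output independently for each realization of $D^n$, minimizing $BR(AP)$ over all $AP$ reduces to minimizing $J(D^n)$ pointwise over the single distribution $AP(D^n)(\cdot|x)$. The crux of the argument is the completing-the-divergence identity: writing $\bar p(y)$ for the posterior predictive $\int p_{\bm{\theta}_m}(y)\, p(\bm{\theta}_m|m,D^n)\, d\bm{\theta}_m$, one obtains
\[
J(D^n) = E_{\bm{\theta}_m|m,D^n}\!\left[\mbox{KL}(p_{\bm{\theta}_m} \,\|\, \bar p)\right] + \mbox{KL}(\bar p \,\|\, AP(D^n)),
\]
by adding and subtracting $\ln \bar p(y)$ inside the logarithm. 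The first term does not depend on the decision, and the second is a genuine KL divergence; by nonnegativity of KL (with equality if and only if the two arguments coincide), the minimizer must be $AP(D^n) = \bar p = p(y|\mbox{do}(X=x),m,D^n)$, which is exactly (\ref{predict_fixed_model}).

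I expect the main obstacle to be purely the bookkeeping in this completing-the-divergence step, namely verifying that the cross term collapses as claimed. Concretely, after exchanging the $y$-integral with the posterior expectation, the cross term becomes
\[
\int \left(E_{\bm{\theta}_m|m,D^n}\big[p_{\bm{\theta}_m}(y)\big]\right) \ln\frac{\bar p(y)}{AP(D^n)(y|x)}\, dy = \int \bar p(y)\, \ln\frac{\bar p(y)}{AP(D^n)(y|x)}\, dy,
\]
using precisely the definition of $\bar p$ as the posterior mean of $p_{\bm{\theta}_m}$. Beyond this, the only care needed is to justify the Fubini interchange and the pointwise minimization over decision functions, both of which are standard under the integrability conditions implicit in the setup; neither raises any genuine difficulty, so the entire argument is expected to be short once the decomposition is in place.
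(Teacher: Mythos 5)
Your proposal is correct and follows essentially the same route as the paper: both reduce the Bayes risk to a pointwise minimization of the posterior-expected KL loss for each fixed $D^{n}$, and both identify the posterior predictive $\int p(y|\mbox{do}(X=x),m,\bm{\theta}_{m})p(\bm{\theta}_{m}|m,D^{n})d\bm{\theta}_{m}$ as the minimizer via nonnegativity of the Kullback--Leibler divergence. Your ``completing-the-divergence'' decomposition is just a more explicit rendering of the paper's step of dropping the $AP$-independent term and invoking Shannon's inequality.
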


\begin{proof}
 The minimization of the Bayes risk function is reduced to the minimization of the loss function weighted by the posterior distribution \cite{berger2013statistical}.
That is, 
\begin{multline}
\argmin_{AP} BR(AP)=\\
\argmin_{AP} \int Loss(\bm{\theta}_{m}, AP(D^{n}))p(\bm{\theta}_{m}|m, D^{n})d\bm{\theta}_{m}.
\end{multline}
Substituting (\ref{loss}) into the loss function and removing the terms that do not depend on $AP$, we have
\begin{align}
\argmin_{AP} BR(AP)=\argmax_{AP} \int\int p(y|\mbox{do}(X=x), m,\bm{\theta}_{m})\nonumber\\
\times p(\bm{\theta}_{m}|m,D^{n})\ln AP(D^{n})d\bm{\theta}_{m}dy\\
=\argmax_{AP}\int p(y|\mbox{do}(X=x),m,D^{n})\ln AP(D^{n})dy.
\end{align}
From Shannon's inequality \cite{cover2012elements},
\begin{multline}
 \argmax_{AP}\int p(y|\mbox{do}(X=x),m,D^{n})\ln AP(D^{n})dy=\\
 p(y|\mbox{do}(X=x),m,D^{n}).
\end{multline}
\qed
\end{proof}

\begin{example}
 Assume that the causal diagram $m$ for $X, Y, Z$ is $G_{1}$ in Figure \ref{fig_diagram} and the structural equations are given by (\ref{example_SEM_1_1}) and (\ref{example_SEM_1_2}).
In addition, as the prior distributions of $\theta_{Y|Z}, \theta_{Z|X}$, assume that $\theta_{Y|Z}, \theta_{Z|X}\sim\mathcal{N}(0,\alpha^{-1})$.
Then, the Bayes optimal estimator of the intervention effect is given by
\begin{align}
p(y|\mbox{do}(X=x), m=G_{1}, D^{n})=\nonumber\\
\int\int \mathcal{N}(y; \theta_{Y|Z}\theta_{Z|X}x, 1+\theta_{Y|Z}^{2})\mathcal{N}(\theta_{Y|Z}; \mu_{Y|Z}, s_{Y|Z}^{-1})\times \nonumber \\
\mathcal{N}(\theta_{Z|X}; \mu_{Z|X}, s_{Z|X}^{-1})d\theta_{Y|Z}d\theta_{Z|X},\label{predict_example_1}
\end{align}
\begin{align}
\mu_{Y|Z}&=s_{Y|Z}^{-1}\bm{z}^{T}\bm{y},\\
s_{Y|Z}&=\alpha+\bm{z}^{T}\bm{z},\\
\mu_{Z|X}&=s_{Z|X}^{-1}\bm{x}^{T}\bm{z},\\
s_{Z|X}&=\alpha+\bm{x}^{T}\bm{x},
\end{align}
where $\bm{x}=(x_{1},\ldots,x_{N})^{T}, \bm{y}=(y_{1},\ldots,y_{N})^{T}, \bm{z}=(z_{1},\ldots, z_{N})$.

Similarly, assume that the causal diagram $m$ for $X, Y, Z$ is $G_{2}$ in Figure \ref{fig_diagram} and the structural equations are given by (\ref{example_SEM_2_1}) and (\ref{example_SEM_2_2}).
In addition, as the prior distributions of $\theta_{Y|X}, \theta_{Y|Z}$, assume that $\theta_{Y|X}, \theta_{Y|Z}\sim\mathcal{N}(0, \alpha^{-1})$.
Let $\bm{\theta}_{Y|XZ}=(\theta_{Y|X},\theta_{Y|Z})$, then, the Bayes optimal estimator of the intervention effect is given by
\begin{multline}
p(y|\mbox{do}(X=x), m=G_{2}, D^{n})=\\
\int \mathcal{N}(y; \tilde{\mu}, \tilde{s}^{-1})\mathcal{N}(\bm{\theta}_{Y|XZ};\bm{\mu}_{Y|XZ}, \bm{S}_{Y|XZ}^{-1})d\bm{\theta}_{Y|XZ},\label{predict_example_2}
\end{multline}
\begin{align}
\tilde{\mu}&=\tilde{s}^{-1}\theta_{Y|X}x-\mu_{Z}\theta_{Y|Z}\\
\tilde{s}&=\frac{\alpha s_{Z}}{\alpha\theta_{Y|Z}^{2}+s_{Z}}\\
\bm{\mu}_{Y|XZ}&=\bm{S}_{Y|XZ}^{-1}\bm{X}_{\setminus \bm{y}}^{T}\bm{y},\\
\bm{S}_{Y|XZ}&=\alpha\bm{I}+\bm{X}_{\setminus \bm{y}}^{T}\bm{X}_{\setminus \bm{y}},
\end{align}
where $\mathcal{N}(\cdot; \bm{\mu}, \bm{\Sigma})$ denotes the probability density function of the mulrivariate normal distribution with mean vector $\bm{\mu}$ and covariance matrix $\bm{\Sigma}$ and 
\begin{align}
\bm{X}_{\setminus \bm{y}}=
\begin{pmatrix}
\bm{x}^{T}\\
\bm{z}^{T}
\end{pmatrix}^{T}.
\end{align}

We note that the Bayes optimal estimator (\ref{predict_example_1}) and (\ref{predict_example_2}) cannot be calculated analytically even in the cases of the linear structural equation model of these examples.
In the later experiments, we performed a numerical integration for the calculations.
\end{example}

\section{Decision theoretic approach for estimating intervention effect; causal diagram is unknown}

Here, we consider the case where not only the parameter $\bm{\theta}_{m}$, but also the causal diagram $m$ is unknown.
Since $m$ is unknown, the loss function is defined for $m$ and $\bm{\theta}_{m}$.
\begin{multline}
Loss(m,\bm{\theta}_{m}, AP(D^{n}))=\\
\int p(y|\mbox{do}(X=x),m,\bm{\theta}_{m})\ln \frac{p(y|\mbox{do}(X=x),m,\bm{\theta}_{m})}{AP(D^{n})(y|x)}dy.
\end{multline}
The risk function is given by
\begin{align}
Risk(m,\bm{\theta}_{m},AP)=E_{D^{n}|\bm{\theta}_{m},m}\left[Loss(m, \bm{\theta}_{m}, AP(D^{n}))\right].
\end{align}
In this study, we consider the case where the set of candidate causal diagrams is given by $\mathcal{M}$ and we can assume the prior distribution $p(m)$ for $m\in\mathcal{M}$ and $p(\bm{\theta}_{m}|m)$ for $\bm{\theta}_{m}$ under $m$.
Then, the Bayes risk function is given by
\begin{align}
BR(AP)=E_{m}\left[E_{\bm{\theta}_{m}|m}\left[Risk(m,\bm{\theta}_{m}, AP)\right]\right].\label{BR_model}
\end{align}
In this case, the following theorem holds.
\begin{theo}
 The Bayes optimal estimator that minimizes (\ref{BR_model}) is given by 
\begin{align}
AP^{*}(D^{n})=p(y|\mbox{do}(X=x),D^{n}),\label{predict_mixed_model}
\end{align}
where 
\begin{multline}
p(y|\mbox{do}(X=x),D^{n})=\\
\sum_{m\in\mathcal{M}}p(m|D^{n})p(y|\mbox{do}(X=x),m,D^{n}),
\end{multline}
and $p(y|\mbox{do}(X=x),m,D^{n})$ is given by (\ref{predict_fixed_model}).
\end{theo}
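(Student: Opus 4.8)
The plan is to follow the template of the proof of Theorem~\ref{theorem1}, treating the pair $(m,\bm{\theta}_{m})$ as an augmented parameter carrying the joint prior $p(m)p(\bm{\theta}_{m}|m)$. First I would invoke the same standard reduction \cite{berger2013statistical}: minimizing the Bayes risk (\ref{BR_model}) over all decision functions is equivalent, for the observed $D^{n}$, to minimizing the loss averaged against the joint posterior $p(m,\bm{\theta}_{m}|D^{n})$, i.e.
\begin{multline*}
\argmin_{AP} BR(AP)=\\
\argmin_{AP}\sum_{m\in\mathcal{M}}\int Loss(m,\bm{\theta}_{m},AP(D^{n}))\,p(m,\bm{\theta}_{m}|D^{n})\,d\bm{\theta}_{m}.
\end{multline*}

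Next I would substitute the Kullback--Leibler loss and discard the terms independent of $AP$, exactly as before, which turns the problem into an $\argmax$ of the form $\int q(y)\ln AP(D^{n})(y|x)\,dy$ with $q(y)$ the posterior-averaged intervention effect. The one new piece of bookkeeping is to factor the joint posterior as $p(m,\bm{\theta}_{m}|D^{n})=p(m|D^{n})\,p(\bm{\theta}_{m}|m,D^{n})$ and to pull the $\bm{\theta}_{m}$-integral inside; this identifies the inner object as precisely the fixed-model predictive (\ref{predict_fixed_model}), so that $q(y)=\sum_{m\in\mathcal{M}}p(m|D^{n})\,p(y|\mbox{do}(X=x),m,D^{n})=p(y|\mbox{do}(X=x),D^{n})$.

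Finally, I would apply Shannon's inequality \cite{cover2012elements} to the $\argmax$ to conclude that the maximizer $AP^{*}(D^{n})(y|x)$ equals $q(y)$, which is exactly (\ref{predict_mixed_model}). Because the reduction step, the loss manipulation, and Shannon's inequality are all inherited verbatim from Theorem~\ref{theorem1}, I do not expect a genuine obstacle here; the whole statement is essentially Theorem~\ref{theorem1} applied to the augmented parameter $(m,\bm{\theta}_{m})$. The only point requiring care is the posterior factorization together with moving the finite sum over $\mathcal{M}$ and the $\bm{\theta}_{m}$-integral past $\ln AP(D^{n})(y|x)$, which is legitimate since that factor depends on neither $m$ nor $\bm{\theta}_{m}$ and $\mathcal{M}$ is finite.
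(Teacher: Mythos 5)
Your proposal is correct and follows exactly the route the paper intends: the paper's own proof of this theorem is simply the remark that it goes ``in the same manner as the proof of Theorem 1,'' and your argument is precisely that proof carried out for the augmented parameter $(m,\bm{\theta}_{m})$, with the posterior factorization $p(m,\bm{\theta}_{m}|D^{n})=p(m|D^{n})p(\bm{\theta}_{m}|m,D^{n})$ and Shannon's inequality supplying the conclusion. No gaps.
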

\begin{proof}
 It is proved in the same manner as the proof of Theorem 1.\qed
\end{proof}

\begin{example}
Assume that the set $\mathcal{M}$ of the candidate causal diagrams is $\left\{G_{1}, G_{2}\right\}$ in Figure \ref{fig_diagram} and the structural equations under each causal diagram are given in the same way as in Examples 2 and 3.
When the prior distribution of the model $m$ is $p(m=G_{1}), p(m=G_{2})$ and the prior distribution of the parameter $\bm{\theta}_{m}$ under each model are given in the same way as in Example 3, the Bayes optimal estimator of the intervention effect is given by
\begin{align}
&p(y|\mbox{do}(X=x),D^{n})=\nonumber\\
&p(m_{1}|D^{n})p(y|\mbox{do}(X=x),m=G_{1},D^{n})+\\
&p(m_{2}|D^{n})p(y|\mbox{do}(X=x),m=G_{2},D^{n}),\nonumber
\end{align}
where $p(y|\mbox{do}(X=x),m=G_{1},D^{n}),p(y|\mbox{do}(X=x),m=G_{2},D^{n})$ are the same as given by (\ref{predict_example_1}) (\ref{predict_example_2}).
\end{example}

\section{Numerical experiments}
In this section, we show the effectiveness of the proposed method through numerical simulations.

\subsection{Case 1 : causal diagram is known}
\label{experiment_1}
First, we deal with the case where the causal diagram is known.
We consider the two cases, one is that the true diagram is $G_{1}$ in Figure \ref{fig_diagram} and the other is that the true diagram is $G_{2}$ in Figure \ref{fig_diagram}.
The structural equations are (\ref{example_SEM_1_1}) and (\ref{example_SEM_1_2}) for $G_{1}$ and (\ref{example_SEM_2_1}) and (\ref{example_SEM_2_2}) for $G_{2}$.
We assume that the probability distributions of variables corresponding to leaf nodes in each model, that is, $X$ in $G_{1}$ and $Z$ in $G_{2}$, are both $\mathcal{N}(0, 1^{2})$.
We also assume that the prior distributions of the parameters under each model, that is, $\theta_{Y|Z}, \theta_{Z|X}$ in $G_{1}$ and $\theta_{X|Z}, \theta_{Y|X}, \theta_{Y|Z}$ in $G_{2}$, are all $\mathcal{N}(0, 1^{2})$.
We consider the problem to estimate the intervention effect on $Y$ when intervening $X=1$ given $D^{n}=(x_{n}, y_{n}, z_{n})_{n=1,\ldots,N}$ as a sample of $(X, Y, Z)$.
We compare the following three methods.
\begin{description}
 \item[Method 1 (ML)] \ \\ Calculate the maximum likelihood (ML) estimator $\bm{\theta}_{m, ML}$ by
\begin{align}
  \hat{\bm{\theta}}_{m, ML}=\argmax_{\bm{\theta}_{m}}p(D^{n}|\bm{\theta}_{m}),
\end{align}
and substitute it to (\ref{effect_parameter}).
 \item[Method 2 (MAP)] \ \\Calculate the maximum a posteriori (MAP) estimator $\bm{\theta}_{m, MAP}$ by
\begin{align}
 \hat{\bm{\theta}}_{m,MAP}=\argmax_{\bm{\theta}_{m}}p(\bm{\theta}_{m}|D^{n}),
\end{align}
and substitute it to (\ref{effect_parameter}).
 \item[Method 3 (BAYES)] \ \\Calculate the Bayes optimal estimator (\ref{bayes_optimal_fixed_model}).
\end{description}

Figure \ref{fig_result1} shows the Kullback-Leibler divergence between the true intervention effect on $Y$ when intervening $X=1$ in the model $G_{1}$ and the estimator of each method.
Figure \ref{fig_result2} is the same result for the model $G_{2}$.
In either case, as the sample size increases, the results of the three methods converge.
This can be explained by the fact that the posterior distribution of parameters concentrates around the MAP estimator as the sample size increases, and the MAP estimator and the ML estimator also approaches.
However, when the sample size is small, method 2 is better than method 1, and method 3 is better than method 2.
In this experiment, we experimented with models with very few variables, so the difference of each method is small, but it is expected that the difference of each method will become larger as the model becomes more complicated.

\begin{figure}[t]
 \begin{center}
  \includegraphics[keepaspectratio=true,width=\linewidth]{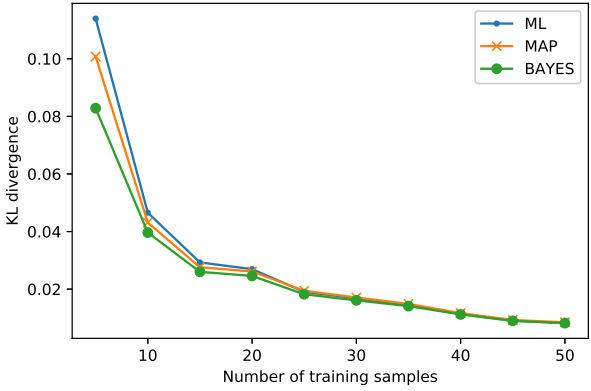}
\caption{The Kullback-Leibler divergence between the true intervention effect on $Y$ when intervening $X=1$ in the model $G_{1}$ and the estimator of each method.}
\label{fig_result1}
 \end{center}
\end{figure}

\begin{figure}[t]
\begin{center}
  \includegraphics[keepaspectratio=true,width=\linewidth]{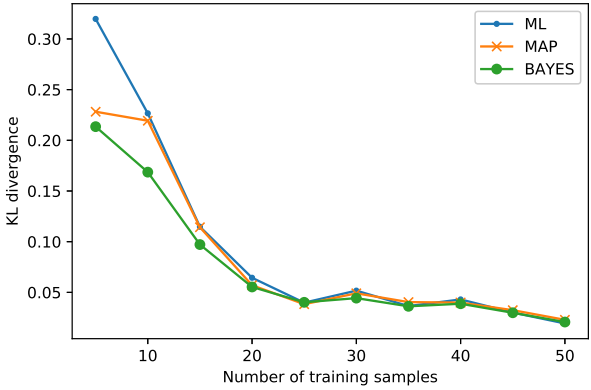}
\caption{The Kullback-Leibler divergence between the true intervention effect on $Y$ when intervening $X=1$ in the model $G_{2}$ and the estimator of each method.}
\label{fig_result2}
 \end{center}
\end{figure}

\subsection{Case 2 : causal diagram is unknown}
Next, we deal with the case where the causal diagram is unknown.
Let the set $\mathcal{M}$ of the candidates of the causal model be $\left\{G_{1}, G_{2}\right\}$ in Figure \ref{fig_diagram}.
The assumptions for the structural equations, the probability distributions of the leaf variables, and the prior distributions of the parameters are the same as the previous experiment.
We also assume that $p(m=G_{1})=p(m=G_{2})=\frac{1}{2}$.
Note that $X$ and $Y$ are conditionally independent when $Z$ is given in the model $G_{1}$, but they are not in the model $G_{2}$, so we can identify that which model generated data with high probability as the sample size increases.
As in the case of the previous experiment, we consider the problem to estimate the intervention effect on $Y$ when intervening $X=1$ given $D^{n}=(x_{n}, y_{n}, z_{n})_{n=1,\ldots,N}$ as a sample of $(X, Y, Z)$.
We compare the following two methods.
\begin{description}
 \item[Method 1 (MAP)] \ \\Estimate the model by
\begin{align}
 \hat{m}=\argmax_{m\in\mathcal{M}}p(m|D^{n})
\end{align}
and calculate the Bayes optimal estimator under the model $\hat{m}$,
\begin{align}
  p(y|\mbox{do}(X=x),\hat{m},D^{n}).
\end{align}
 \item[Method 2 (BAYES)] \ \\Calculate the Bayes optimal estimator (\ref{predict_mixed_model}).
\end{description}
Figure \ref{fig_result3} shows the Kullback-Leibler divergence between the true intervention effect on $Y$ when intervening $X=1$ and the estimator of each method.
The results of the two methods also approach as the sample size increases.
This can be explained from the fact that the as the sample size increases, the posterior probability of the true model approaches to $1$.
However, when the sample size is small, Method 2 is better than Method 1.
In this experiment, we experimented with only two candidate models, so there are differences between two methods only in small sample sizes.
It is expected that the difference will increase as the number of candidate models increases.

\begin{figure}[t]
 \begin{center}
  \includegraphics[keepaspectratio=true,width=\linewidth]{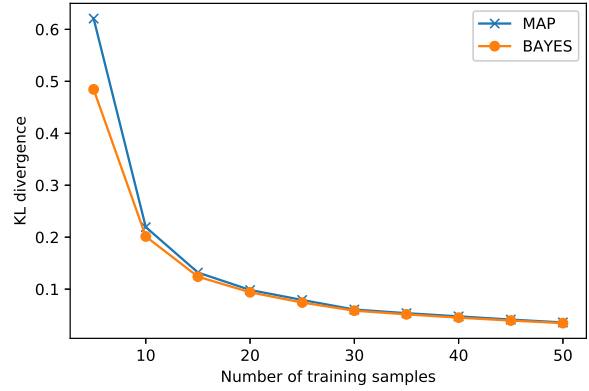}
\caption{The Kullback-Leibler divergence between the true intervention effect on $Y$ when intervening $X=1$ and the estimator of each method. Models $G_{1}$ and $G_{2}$ appear with equal probability.}
\label{fig_result3}
 \end{center}
\end{figure}

\section{Conclusion and future works}
In this study, the Bayes optimal estimation method for estimating the intervention effect was derived by formulating the estimation problem in the framework of the statistical decision theory.
In the estimation of the intervention effect, it is common to first estimate the causal diagram, estimate the conditional probability distributions among the variables, then calculate the intervention effect.
However, from the viewpoint of the Bayes decision theory framework, instead of determining models and parameters, weighting with a posterior probability or posterior distribution is optimal.

We describe some future works.
In the examples in this paper, we dealt with the case where the structural equations are linear.
It is necessary to derive the general form of the Bayes optimal estimator for those cases.
Further, it seems to be meaningful to investigate how the difference between the methods in the experiments becomes large in the cases other than the linear structural equation model.

In this study, we did not mention the calculation methods and computational complexity.
Even if the model is known and structural equations are linear, the Bayes optimal intervention effect estimator cannot be analytically calculated.
Therefore, in this paper, the estimator was calculated by numerical integration.
As the model becomes more complicated, the computational complexity will become higher.
It is necessary to construct an approximation algorithm that efficiently calculates the Bayes optimal estimator.
Also, when the model is unknown, it is necessary to calculate the posterior probability of all models, but as the number of candidate models becomes large, this also becomes computationally difficult.
It is also necessary to construct an approximation algorithm that efficiently calculates the Bayes optimal estimator in the case where the model is unknown.

% conference papers do not normally have an appendix

% use section* for acknowledgment
\section*{Acknowledgment}

We would like to acknowledge all
members of Matsushima Lab. and Goto Lab. in Waseda Univ. for their
helpful suggestions to this work.
This research is partially supported by No. 16K00417 of Grant-in-Aid for
Scientific Research Category (C) and No. 18H03642 of Grant-in-Aid for Scientific Research Category (A), Japan Society for the Promotion
of Science.

% trigger a \newpage just before the given reference
% number - used to balance the columns on the last page
% adjust value as needed - may need to be readjusted if
% the document is modified later
%\IEEEtriggeratref{8}
% The "triggered" command can be changed if desired:
%\IEEEtriggercmd{\enlargethispage{-5in}}

% references section

% can use a bibliography generated by BibTeX as a .bbl file
% BibTeX documentation can be easily obtained at:
% http://mirror.ctan.org/biblio/bibtex/contrib/doc/
% The IEEEtran BibTeX style support page is at:
% http://www.michaelshell.org/tex/ieeetran/bibtex/
%\bibliographystyle{IEEEtran}
% argument is your BibTeX string definitions and bibliography database(s)
%\bibliography{IEEEabrv,../bib/paper}
%
% <OR> manually copy in the resultant .bbl file
% set second argument of \begin to the number of references
% (used to reserve space for the reference number labels box)

\bibliographystyle{IEEEtran}
\bibliography{ref}

% that's all folks
\end{document}